\numberwithin{equation}{section} \numberwithin{theorem}{section}
\def\bt{\begin{thm}}
	\def\et{\end{thm}}
\def\bl{\begin{lem}}
	\def\el{\end{lem}}
\def\bd{\begin{defi}}
	\def\ed{\end{defi}}
\def\bc{\begin{cor}}
	\def\ec{\end{cor}}
\def\bp{\begin{proof}}
	\def\ep{\end{proof}}
\def\br{\begin{rem}}
	\def\er{\end{rem}}
\newtheorem{thm}{Theorem}[section]
\newtheorem{lem}{Lemma}[section]
\newtheorem{defi}{Definition}[section]
\newtheorem{rem}{Remark}[section]
\newtheorem{cor}{Corollary}[section]
\newcommand{\be}{\begin{equation}}
\newcommand{\ee}{\end{equation}}
			\newlength{\figurewidth}
			\newlength{\smallfigurewidth}
\begin{document}
	\title{A Mathematical Model for Tumor Cell Population Dynamics Based on Target Theory and Tumor Lifespan}
	\author{Amin Oroji$^{1}$, Shantia Yarahmadian$^{2}$, Sarkhosh Seddighi$^{3}$ and Mohd
		Omar$^{4}\ast$}
	\date{}
	\maketitle
\begin{abstract}
Radiation Therapy (XRT) is one of the most common cancer treatment methods. In this paper, a new mathematical model is proposed for the population dynamics of heterogeneous tumor cells following external beam radiation treatment. According to the Target Theory, the tumor population is divided into $m$ different subpopulations based on the diverse effects of ionizing radiation on human cells. A hybrid model consists of a system of differential equations with random variable coefficients representing the transition rates between subpopulations is proposed. This model is utilized to simulate the dynamics of cell subpopulations within a tumor. The model also describes the cell damage heterogeneity and the repair mechanism between two consecutive dose fractions. As such, a new definition of tumor lifespan based on population size is introduced. Finally, the stability of the system is studied by using the Gershgorin theorem. It is proven that the probability of target inactivity post radiation plays the most important role in the stability of the system. 
\end{abstract}

\section{Introduction} \label{intro}
It has been experimentally verified that the ionization
process initiated by radiating particles leads to lesions in
the cells \cite{C}. The negative effect on DNA
structure, makes lesions the most harmful consequence of
radiotherapy \cite{WK}, \cite{H}. Substantial progress has been made both in the classification and evaluation of XRT treatment planning through probabilistic modeling. The most well-known models are  Tumor Control Probability (TCP) \cite{ZM}, \cite{DH}, \cite{GN} and Normal Tissue Complication Probability (NTCP) \cite{L}, \cite{KAB}.
\\\\
There have been numerous advances in stochastic modeling of tumor response to radiation treatment, including the linear quadratic model
\cite{ZM}, \cite{F}, cell population dynamics models \cite{QS}, \cite{SHH}, \cite{GLGV}, mixed-effects behavioral models
\cite{BSK} and cell cycle models \cite{KBF}. However, many of these models are constructed to evaluate certain important features of XRT, but they do not incorporate biological tumor damage heterogeneity, which is
the focus of our study. We refer the reader to Michelson and Leith
\cite{ML} for further information on different types of heterogeneity.\\\\
The key concept in understanding
XRT biology is the target theory \cite{R}. 
A target is a
radio-sensitive site within a cell. Each cell contains a certain
number of targets, which may be deactivated after being
hit by radiation particles. Moreover, between two consecutive dose
fractions, each target may become active again following immune
system reaction \cite{T}. Despite the development of several complex interpretations of the target theory, the essential principle entails radiation-induced apoptosis of the organism on account of target(s) inactivation within the organism.
Although targets are considered functioning biological units \cite{N}, the number of
targets and their locations in an organism are not always clear. With regard to cell sensitivity, the majority of models usually assume that cell sensitivity is constant during radiation \cite{KBV1}, \cite{KBV2}, \cite{OMH}. The same assumption is also made for cell populations, in that the viability of a surviving cell is  similar to an irradiated cell, i.e., all cells are assumed to have the same survival probabilities. However, theses assumptions may not be completely accurate, as there is strong evidence that damaged cells are unable to resist radiation
\cite{KBV1}, \cite{KBV2}.
\\\\
The clinical significance of the intra-tumor heterogeneity of cell
phenotypes and cell damage is discussed in \cite{Gupta}, and \cite{DFL}.
As such, providing a definition of a suitable treatment duration is rather a clinical challenge, especially when considering therapeutic response variability. In this regard,
Keinj et al. developed a discrete-time Markov chain multinomial
model for tumor response \cite{KBV1}, which employs the
target theory. This model inspects the number of surviving cells in the tumor but does not consider the tumor lifespan
to be able to measure the tumor's response to treatment.\\\\
In this study, we model tumor population dynamics via a system of ordinary differential equations. Thereafter, we evaluate the transition rates using a Markov chain. The model is then applied to the special case of $m=3$, which is related to the effect of radiation on cells by dividing them into three subpopulations: cells with no effect ($x_0$), cells with single-strand break ($x_1$) and cells with double-strand breaks ($x_2$). In addition, we analyze the system's stability in this case as well as the system bifurcation with two parameters.
\\\\
The paper is organized as follows: Section (\ref{postulates}) introduces the general theory and preliminary findings. In section (\ref{model}), the tumor growth model is
discussed terms of a system of ordinary differential equations with Markov chain coefficients.
The model calibration is presented in section (\ref{calibration}). Thereafter, a new definition for tumor lifespan is proposed in section (\ref{lifespan-defi}). Three targets in each cell and the model parameters are employed to analytically investigate the obtained ODE system stability in
section (\ref{stability}). 
Finally, section (\ref{conclusion}) concludes the study.
\section{Modeling assumptions and framework}\label{postulates}
We have considred the following assumptions in our modeling framework:
\begin{enumerate}
	\item \label{p1} Cells have the same phenotype but they act independently.
	
	\item \label{p2} In the radiotherapy process, the magnitude of each
	dose fraction ($u_0$) is constant during treatment
	(i.e. $u_0=2~Gy$). The time lag between two consecutive dose fractions
	is 24 hours.
	\item \label{p3} Each cell consists of $m$ targets, which may be deactivated  with probability $q$ after each dose fraction. $\textbf{P}(i, j)$ represents the
	treatment probability matrix in the transition from $i$ to $j$
	inactive targets, i.e., deactivating $j$ targets when $i$ targets has been disabled before \cite{KBV1}. $\textbf{P}(i, j)$  is written as:
	
	\begin{equation}\label{P3}
	\displaystyle \textbf{P}(i,j)= \left\{ {{{{m-i} \choose {j-i}}
			q^{j-i}(1-q)^{m-j}} \atop {0} } \hskip 1 cm {{i \leq j} \atop {j <
			i}} \right.
	\end{equation}
	
	\item{}\label{p4} Each target may be revived
	with probability $r$. As described previously \cite{KBV1}, $\textbf{R}(i, j)$ represents the repair probability matrix in the transition from $i$ to $j$, as given by:
	\begin{equation}\label{R}
	\textbf{R}(i,j)= \left\{ {{{i \choose j}r^{i-j}(1-r)^j} \atop {0}}
	\hskip 1cm {{j\leq i < m} \atop { i < j}} \right.
	\end{equation}
	where $\textbf{R}(m,m)=1$ and  $\textbf{R}(m,j)=0$ for $m \ne j$.
	
	\item{}\label{p5} $x_i$ indicates the cell subpopulation with
	$i$ deactivated target(s), where $i=0,...,(m-1)$. For $i\neq j$, each cell can
	move from $x_i$ to $x_j$ with the constant time-independent transition rate of $\alpha(i,j)$.
	
	\item{}\label{p6} A cell will undergo radiation-induced apoptosis if all targets are deactivated.
	The cell death rate in subpopulation $x_i$ is considered as constant, $D_i$.
	
	\item{}\label{p7} Cells can reproduce if all targets become active. For
	simplicity, we assume that just before the repair mechanism
	acts, cells in subpopulation $x_0$ can give birth to new cells
	proportional to subpopulation $x_0$ with a constant rate of $\beta$. As such, each cell in subpopulation $x_0$ can
	divide into exactly two daughter cells with probability $\mu$
	or it can remain unchanged with probability $(1-\mu)$ between two consecutive dose fractions.%
\end{enumerate}
\section{Model derivation}\label{model}
As indicated in Fig.~(\ref{schem}), tumor dynamics is generally described as
the effect of radiotherapy on the different tumor cell subpopulations. The conservation law for subpopulation $x_i$'s, $i=0, \dots, m-1$ is written as follows. For $i=0$:
\begin{figure*}[h] \small
	\centering
	\epsfig{width=1\figurewidth,file=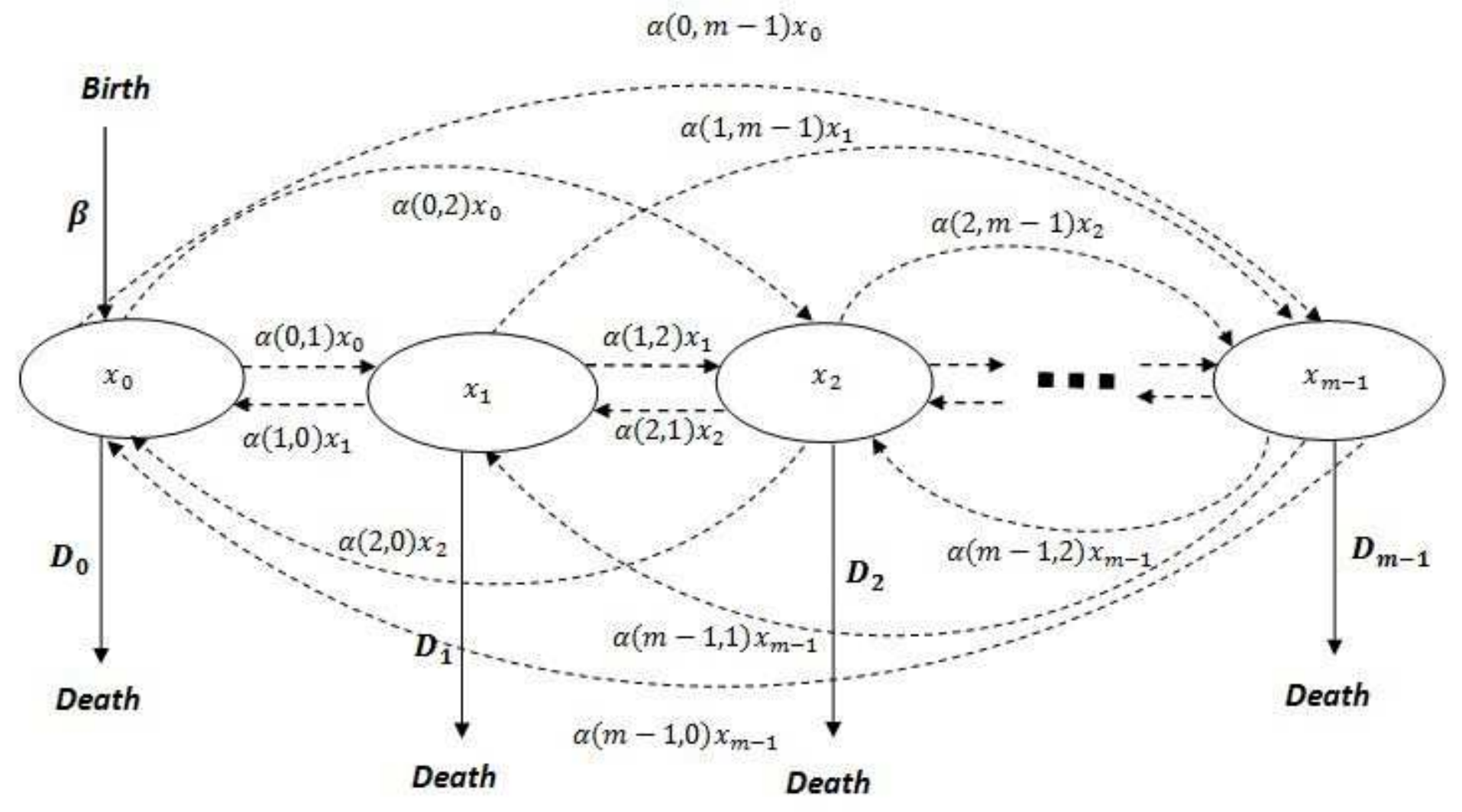}\\
	
	\caption{Schematic illustration of tumor cell
		population model.}
	\label{schem}
\end{figure*}

\begin{equation} \label{trans1}
{dx_0(t) \over dt}=\underbrace{\beta~ x_0(t)}_\text{Birth due to mitosis}+\underbrace{\sum_{j=1}^{m-1} \alpha(j,0)x_j(t)}_\text{transition to $x_0$}-\overbrace{\Big[\sum_{k=1}^{m-1} \alpha(0,k)\Big]~x_0(t)}^\text{transition from $x_0$}-\overbrace{D_0~x_0(t)}^\text{Death}
\end{equation}

and for $i\ne 0$:

\begin{equation} \label{trans2}
{dx_i(t) \over dt}=\underbrace{\displaystyle {\sum_{\underset{j\neq i}{j=0}}^{m-1}\alpha(j,i)~x_j(t)}}_\text{transition to $x_i$}-\overbrace{\Big[\sum_{\underset{k\neq 1}{k=0}}^{m-1} \alpha(i,k)\Big]~x_i(t)}^\text{transition from $x_i$} - \overbrace{D_i~x_i(t)}^\text{Death}
\end{equation}

These equations produce the following system:
\begin{eqnarray} \label{ode model}
{dx_0(t) \over dt}& =& \beta ~x_0(t) + \sum_{j=1}^{m-1} \alpha(j,0) ~x_j(t) - \Big[\sum_{k=1}^{m-1} \alpha(0,k) + D_0\Big]~x_0(t)  \\
{dx_1(t) \over dt}& =& \displaystyle{\sum_{\underset{j\neq
			1}{j=0}}^{m-1} \alpha(j,1) ~x_j(t)} -
\displaystyle{\Big[\sum_{\underset{k\neq 1}{k=0}}^{m-1} \alpha(1,k) +
	D_1\Big] ~x_1(t)} \nonumber
\\
&\vdots& \nonumber \\ {dx_{m-1}(t) \over dt}& =&  \sum_{j=0}^{m-2}
\alpha(j,m-1) ~x_j(t)-\Big[\sum_{k=0}^{m-2} \alpha(m-1,k) +
D_{m-1}\Big]~x_{m-1}(t) \nonumber
\end{eqnarray}
\section{Model Calibration}\label{calibration}
The probability that a cell will remain in $x_0$ after radiation
is $\textbf{P}(0,0)$. Therefore, the average number of births in
one day after applying the $k^{th}$ dose fraction and just
before the $(k+1)^{th}$ dose fraction is equal to:
\begin{eqnarray}\label{birth rate}
n(k) &=& x_0(k) \hspace{0.2cm} \mu
\hspace{0.2cm}\textbf{P}(0,0)\nonumber
\\&=& x_0(k) \hspace{0.2cm}\mu(1-q)^m
\end{eqnarray}
As seen in Eq.~(\ref{birth rate}), the newborn cells'
population size is proportional to $x_0$. Therefore, the birth
rate can be taken as:
\begin{equation}
\beta=\mu(1-q)^m  \hspace{0.2cm}
\end{equation}
\\
\begin{lem}\label{lem-Pi}
	Suppose that a cell has $i$ deactivated target(s) just before the
	application of a dose fraction and $\Pi = \textbf{PR}$. After treatment and repair,
	\begin{enumerate}
		\item $\Pi(i, j)$ represents the probability that a cell with $i$ deactivated target(s) just before the application of a dose fraction has $j$ deactivated target(s) right before the
		application of the next dose fraction. 
		\item An average number of $x_i\Pi(i,j)$ cells move from $x_i$ into $x_j$.
		\item For fixed $m$ and $k\geq 1$, the map $i\rightarrow\Pi^{k}(i,m)$ is increasing.
	\end{enumerate}
\end{lem}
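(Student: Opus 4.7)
Parts (1) and (2) are mostly bookkeeping. For (1), between two consecutive dose fractions a cell first undergoes irradiation (governed by $\mathbf{P}$) and then the repair step (governed by $\mathbf{R}$); conditioning on the intermediate number of deactivated targets gives $\Pi(i,j)=\sum_{k}\mathbf{P}(i,k)\mathbf{R}(k,j)$, which is the $(i,j)$ entry of $\mathbf{PR}$. For (2), assumption (\ref{p1}) implies that cells in $x_i$ act independently and each moves to $x_j$ with probability $\Pi(i,j)$, so linearity of expectation yields an average of $x_i\,\Pi(i,j)$.

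The real content is in (3). My plan is to prove the sharper statement that $\Pi$, and therefore every power $\Pi^{k}$, is stochastically monotone on $\{0,1,\dots,m\}$, and then specialize. Recall that a kernel $K$ is stochastically monotone iff $Kf$ is non-decreasing for every non-decreasing $f$, and that this property is preserved under composition. Applied to the non-decreasing indicator $f(j)=\mathbf{1}_{\{j=m\}}$, one obtains $\Pi^{k}(i,m)=(\Pi^{k}f)(i)$ non-decreasing in $i$, which is exactly what (3) claims.

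It therefore suffices to verify stochastic monotonicity for $\mathbf{P}$ and $\mathbf{R}$ separately via explicit monotone couplings. For $\mathbf{P}$, the post-irradiation count starting from $i$ equals $i+B_{i}$ with $B_{i}\sim\mathrm{Bin}(m-i,q)$; for $i_1<i_2$ one realizes $B_{i_2}$ as the count from the first $m-i_2$ of $m-i_1$ independent Bernoulli$(q)$ trials, so that $B_{i_1}-B_{i_2}\sim\mathrm{Bin}(i_2-i_1,q)\le i_2-i_1$ almost surely and hence $i_1+B_{i_1}\le i_2+B_{i_2}$. For $\mathbf{R}$, the post-repair count from $i<m$ is $\mathrm{Bin}(i,1-r)$, which is stochastically increasing in $i$ by the same additive coupling, while the boundary value $\mathbf{R}(m,m)=1$ from assumption (\ref{p4}) is deterministically $m$ and dominates every distribution on the state space. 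The only point that needs care is this absorbing boundary; beyond that I do not anticipate any serious obstacle, and for a sanity check the $k=1$ case reduces to $\Pi(i,m)=\mathbf{P}(i,m)=q^{m-i}$, which is manifestly increasing in $i$.
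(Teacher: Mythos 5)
Your parts (1) and (2) coincide with the paper's own argument: the same conditioning on the intermediate state to get $\Pi(i,j)=\sum_{k}\mathbf{P}(i,k)\mathbf{R}(k,j)$, and the same appeal to independence of cells for the expected flux $x_i\Pi(i,j)$. The real divergence is in part (3), which the paper does not prove at all --- it simply writes ``See \cite{KBV2}'' and defers to that reference. You instead supply a complete, self-contained argument: stochastic monotonicity of $\mathbf{P}$ and $\mathbf{R}$ via additive Bernoulli couplings, closure of monotonicity under composition, and evaluation against the non-decreasing indicator $f(j)=\mathbf{1}_{\{j=m\}}$ so that $\Pi^{k}(i,m)=(\Pi^{k}f)(i)$ is monotone in $i$. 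The couplings check out (for $\mathbf{P}$ the post-irradiation count is $i+\mathrm{Bin}(m-i,q)$ and your realization gives $B_{i_1}-B_{i_2}\le i_2-i_1$ pointwise; for $\mathbf{R}$ the absorbing row $\mathbf{R}(m,m)=1$ dominates every other row trivially), and your approach in fact yields the stronger statement that $\Pi^k$ is stochastically monotone, of which (3) is a corollary, with the uniform-in-$k$ conclusion coming for free from closure under composition. The only caveat worth flagging is that your argument as stated gives ``non-decreasing''; if ``increasing'' is meant strictly, a small extra observation (e.g.\ $\Pi(i,m)=q^{m-i}$ strictly increasing for $0<q<1$, propagated through the powers) would be needed. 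On balance your proposal is not just correct but more informative than what the paper prints.
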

\begin{proof}
	\begin{enumerate}
		\item Suppose that $\Pi=\textbf{PR}$. Therefore:
		\begin{equation}
		\Pi(i,j)=\sum_{k=0}^m{\textbf{P}(i,k)\textbf{R}(k,j)}
		\end{equation}
		Eq.~\eqref{R} shows that $\textbf{R}(m,j)=0$ for $j<m$. Therefore:
		\begin{equation}
		\Pi(i,j)=\sum_{k=0}^{m-1}{\textbf{P}(i,k)\textbf{R}(k,j)}
		\end{equation}
		
		Now assume that a cell has $i$ deactivated targets just before
		applying a dose fraction. After radiation and right before the
		repair mechanism, this cell may remain in subpopulation $x_i$ with
		probability $\textbf{P}(i, i)$, or it may move to subpopulation
		$x_k$, $k=i+1,...,(m-1)$, with probability $\textbf{P}(i,k)$. Following repair, this cell may move from subpopulation
		$x_k$ to subpopulation $x_j$ with probability $\textbf{R}(k,j)$.
		Therefore, the probability of transitioning from subpopulation
		$x_i$ into subpopulation $x_j$ after treatment and repair (one
		day) is $\Pi(i, j)$.
		\\
		\item The effect of treatment and repair on one cell is
		independent of the rest of the cells. Therefore, the average
		number of cells moving from subpopulation $x_i$ into subpopulation
		$x_j$ is equal to $x_i\Pi(i,j)$. \item See \cite{KBV2}.
	\end{enumerate}
\end{proof}

The following corollary is a direct consequence of lemma
\eqref{lem-Pi}.

\begin{cor}\label{cor-pi}
	With the same assumptions described in Lemma \eqref{lem-Pi}:
	\begin{enumerate}
		\item The cells' transition rate from subpopulation $x_i$ into
		subpopulation $x_j$ is equal to $\alpha(i,j)=\Pi(i,j) \hspace{0.2
			cm}(Day^{-1})$. 
		\item The death rate of subpopulation $x_i$ is $D_i = \Pi(i,m)\hspace{0.2cm}(Day^{-1})$.
	\end{enumerate}
\end{cor}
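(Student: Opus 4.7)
The plan is to read off the coefficients of the ODE system in Section~\ref{model} by matching the average per-day flux between subpopulations (given by Lemma~\ref{lem-Pi}) against the flux that the ODE produces over the same one-day window. Assumption~\ref{p2} fixes the natural time unit: one day elapses between two consecutive dose fractions, and Lemma~\ref{lem-Pi} tells us exactly what happens to the subpopulations over such a window. So the strategy is to treat the rates $\alpha(i,j)$ and $D_i$ as unknowns and to calibrate them by demanding that the ODE reproduces the correct discrete per-day transitions dictated by $\Pi$.

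For part 1, I would first recall from part 2 of Lemma~\ref{lem-Pi} that on average $x_i(t)\,\Pi(i,j)$ cells transit from $x_i$ into $x_j$ over one day. On the ODE side, the contribution to $x_j$ coming from $x_i$ is $\alpha(i,j)\,x_i(t)$ per unit time; integrating this over one day (under the standard modeling assumption that $x_i$ is approximately constant over the 24-hour repair window, consistent with assumption~\ref{p2} and the fact that the coefficients are treated as constants between dose fractions) yields an average flux of $\alpha(i,j)\,x_i$ cells per day. Equating the two expressions gives $\alpha(i,j) = \Pi(i,j)$, with units $\mathrm{Day}^{-1}$.

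For part 2, the idea is the same but applied to the absorbing state $m$. By assumption~\ref{p6}, a cell undergoes radiation-induced apoptosis precisely when all $m$ targets are deactivated, so transition to state $m$ is identified with death. By part 1 of Lemma~\ref{lem-Pi}, the probability that a cell currently in $x_i$ reaches $m$ deactivated targets during one treatment-plus-repair cycle is $\Pi(i,m)$. Matching the ODE death term $D_i\,x_i(t)$ against the discrete per-day death flux $x_i\,\Pi(i,m)$ yields $D_i = \Pi(i,m)$ in units of $\mathrm{Day}^{-1}$. Note that the fact that $m$ is absorbing under $\mathbf{R}$ (since $\mathbf{R}(m,m)=1$ by Eq.~\eqref{R}) is what allows us to consistently treat $\Pi(i,m)$ as a genuine death probability rather than an intermediate transition.

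The only delicate point, and the one I would flag as the main obstacle, is the legitimacy of identifying a discrete one-day transition probability with a continuous-time rate. This is really a modeling choice rather than a theorem: it is justified here because the $\alpha(i,j)$ and $D_i$ were postulated to be time-independent (assumptions~\ref{p5} and \ref{p6}) and the dose-fraction period is exactly one day, so the per-day probabilities from $\Pi$ uniquely determine constants that, when used as ODE coefficients, reproduce the correct average discrete dynamics. Once this correspondence is accepted, both claims follow immediately from Lemma~\ref{lem-Pi}, which is why the authors state the result as a corollary.
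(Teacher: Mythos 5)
Your proposal is correct and matches the paper's intent: the paper offers no explicit proof, simply declaring the corollary a ``direct consequence'' of Lemma \ref{lem-Pi}, and your flux-matching argument (equating the ODE's per-day contribution $\alpha(i,j)\,x_i$ with the lemma's average of $x_i\,\Pi(i,j)$ cells moving per day, and likewise for the death term via the absorbing state $m$) is exactly the implicit reasoning. Your added caveat about identifying a discrete one-day transition probability with a continuous-time rate being a modeling convention rather than a theorem is a fair and worthwhile observation that the paper leaves unstated.
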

\begin{rem}\label{cor1}
	According to lemma \ref{lem-Pi}, we can separate the tumor
	cells into different sub-populations according to their sensitivity to
	the radiation. Therefore, the death rate in a subpopulation with more
	deactivated targets is higher than a subpopulation with fewer deactivated targets, which can be interpreted as the treatment heterogeneity in the model.
\end{rem}
Now, starting with subpopulation $x_0$, cells give birth at a constant
rate of $\mu (1-q)^m \hspace{0.2 cm} (Day^{-1})$. In addition, cells
transit from subpopulation $x_i$ into subpopulation $x_0$ at a rate of
$\Pi(i,0)\hspace{0.2 cm} (Day^{-1})$. Conversely, cells move from
subpopulation $x_0$ into subpopulation $x_i$ at a rate of
$\Pi(0,i)\hspace{0.2 cm} (Day^{-1})$ or may die
at a rate of $\Pi(0,m) \hspace{0.2 cm} (Day^{-1})$, where $\Pi$ is the transition matrix. Hence, for $i = 0, ..., (m-1)$ we have:
\begin{equation} {\label{trans}}
\sum_{l=0}^{m} \Pi(l,i)=1
\end{equation}	
By substituting Eq. \eqref{trans} in Eq. \eqref{trans1} we get:
\begin{eqnarray}
\label{x0}
{dx_0(t) \over dt}& =& [\Pi(0,0)+\mu(1-q)^m-1]~ x_0(t) +
\sum_{l=1}^{m-1} { \Pi(l,0) x_l(t)}
\end{eqnarray}
\\
Same analysis shows that for $i=1,\dots,(m-1)$:
\begin{eqnarray}
{dx_i(t) \over dt}& =&  \displaystyle {\sum_{\underset{l\neq
			i}{l=0}}^{m-1} \Pi(l,i)~x_l(t)}-x_i(t) ~\displaystyle {\sum_{\underset{l\neq i}{l=0}}^{m} {\Pi(i,l)}}
\end{eqnarray}

By substituting Eq. \eqref{trans} in Eq. \eqref{trans2} we get:
\begin{eqnarray}
\label{xi}
{dx_i(t) \over dt}& =&  \displaystyle {\sum_{\underset{l\neq
			i}{l=0}}^{m-1} \Pi(l,i)~x_l(t)}-x_i ~[1-\Pi(i,i)]
\end{eqnarray}

Finally, by substituting Eq.~\eqref{x0} and Eq.~\eqref{xi} in Eq.~\eqref{ode model}, the tumor growth model is described by:
\begin{eqnarray} \label{eq.system1}
{dx_0(t) \over dt}& =& [\Pi(0,0)+\mu(1-q)^m-1]~ x_0(t) +
\sum_{l=1}^{m-1} { \Pi(l,0) x_l(t)}\\
{dx_1(t) \over dt}& =& [\Pi(1,1)-1]~ x_1(t) +
\displaystyle{\sum_{\underset{l \neq 1}{l=0}}^{m-1} { \Pi(l,1)
		x_l(t)}} \nonumber.
\\
&\vdots&\nonumber\\ {dx_{m-1}(t) \over dt}& =& [\Pi(m-1,m-1)-1]~
x_{m-1}(t)+ \sum_{l=0}^{m-2} { \Pi(l,m-1) x_l(t)} \nonumber.
\end{eqnarray}
with initial conditions $\mathrm{x}(0)=(n_0,0,...,0)^\top$.
\section{Tumor lifespan}
\label{lifespan-defi}
What dose magnitude is required to remove the tumor completely? A small number of cells may still remain after resection, that are not visible and detectable by MRI. Therefore, it is crucial to know how many dose fractions must be applied to eliminate the remaining cancerous cells.
\\
The tumor lifespan is defined as the minimum number of dose fractions required
to remove the entire tumor \cite{AMS}. Therefore,
based on tumor population dynamics, the tumor lifespan is defined
as:
\begin{equation}\label{lifespan}
L=\min\{\lfloor t \rfloor :~ \lfloor N(t) \rfloor=0\}
\end{equation}
where
\begin{equation}
N(t)=\sum_{l=0}^{m-1} {x_l(t)}+n(\lfloor t \rfloor)
\end{equation}
 \section{Stability Analysis}\label{stability}
 Suppose that $m$ is an arbitrary integer. System \eqref{eq.system1} can be written as:
 \begin{equation}\label{system2}
 \dot x(t)= A(q,r)~x(t)
 \end{equation}
 where matrix $A$ is described as:
 \begin{equation}\label{matrix-A}
 A_{tk}= \left\{
 \begin{array}{lll}
 \Pi(0,0)+\mu (1-q)^m-1 & i,j=0 \\
 \Pi(i,i)-1             & i=j \hspace{.5cm}and \hspace{.5 cm}i\neq 0\\
 \Pi(j,i) &  i \neq j
 \end{array}
 \right.
 \end{equation}
 where $t=i+1$ and $k=j+1$. Therefore
 $$A_{11}=\Pi(0,0)+\mu (1-q)^m-1$$
 and for $2 \leq t \leq m$
 \begin{eqnarray}
 A_{tt} &=& \Pi(t-1,t-1)-1\\ \nonumber
 &=& \Pi(i,i)-1
 \end{eqnarray}
 
     Let $A_{m\times m}=(A_{tk})$ be a complex matrix. For $t \in {\{1,..., m\}}$ let  $\displaystyle{R_t = \sum_{k\neq{t}} \left|A_{tk}\right|}$ denote the sum of the absolute
     values of the non-diagonal entries in the $t$-th row and
     $\displaystyle{D(A_{tt}, R_t)}$ be the closed disc centered at
     $A_{tt}$ with radius $R_t$, which is known as Gershgorin disc. Eigenvalue of  $A$  lies within at least one of the
     	Gershgorin discs  $\displaystyle{D(A_{tt},R_t)}$ (Gershgorin Theorem \cite{Varga}).

 \begin{lem}\label{sum}
 	Suppose that $B=A^\top$. If $R_t$ defines as
 	\begin{equation}
 	R_t=\sum_{k\neq t}{B_{tk}}
 	\end{equation}
 	then
	 \begin{enumerate} \item \begin{equation} R_t > 0
 		\end{equation} \item
 		\begin{eqnarray}
 		B_{tt} + R_t=\left\{
 		\begin{array}{ll}
 		\mu (1-q)^m-q^m & i = 0 \\
 		-q^{(m-i)}             & 1 \leq i \leq (m-1) \\
 		\end{array}
 		\right.
 		\end{eqnarray}
 	\end{enumerate}
 	\begin{proof}
 		\begin{enumerate}
 			\item According to \eqref{matrix-A}, for $1 \leq t
 			\leq m$
 			\begin{eqnarray}
 			R_t&=&\sum_{k\neq t}{B_{tk}} \nonumber \\
 			&=& \sum_{j\neq i} {\Pi(i,j)} \nonumber \\
 			&>& 0
 			\end{eqnarray}
 			
 			\item First consider that $B=A^\top$ and $t=1$. Therefore:
 			\begin{eqnarray}
 			\sum_{k=1}^{m}{B_{1k}} &=& B_{11} + \sum_{k=2}^m B_{1k} \nonumber
 			\\
 			&=& \Pi(0,0) + \mu (1-q)^m -1 \nonumber + \sum_{j=1}^{m-1} \Pi(0,j) \nonumber \\
 			&=& \Pi(0,0) + \mu (1-q)^m -1 + (1-\Pi(0,0)-\Pi(0,m)) \nonumber
 			\\
 			&=& \mu(1-q)^m - q^m
 			\end{eqnarray}
 			Moreover, for $2 \leq t \leq m$
 			\begin{eqnarray}
 			\sum_{k=1}^{m}{B_{tk}} &=& B_{tt} + \sum_{k\neq t}{B_{tk}} \nonumber \\
 			&=& (\Pi(i,i) - 1) + \sum_{j\neq i}{\Pi(i,j)} \nonumber \\
 			&=& - \Pi(i,m) \nonumber \\
 			&=& - q^{m-i}
 			\end{eqnarray}
 		\end{enumerate}
 	\end{proof}
 \end{lem}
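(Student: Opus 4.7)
The plan is to reduce both claims to the row-stochastic identity $\sum_{j=0}^{m}\Pi(i,j)=1$ from Eq.~\eqref{trans}, after first unpacking what $B = A^\top$ looks like entry-by-entry in terms of $\Pi$. Setting the convention $i=t-1$, $j=k-1$, the definition \eqref{matrix-A} of $A$ shows that the off-diagonal entry $A_{kt}$ equals $\Pi(i,j)$, so transposing gives $B_{tk}=A_{kt}=\Pi(i,j)$ for $k\neq t$, while $B_{tt}=A_{tt}$. Therefore
\[
R_t \;=\; \sum_{k\neq t} B_{tk} \;=\; \sum_{\substack{j=0\\ j\neq i}}^{m-1}\Pi(i,j).
\]

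For part 1, each $\Pi(i,j)$ is a probability, so the sum is non-negative, and I would exhibit one strictly positive off-diagonal term to get $R_t>0$: for $i\geq 1$ the entry $\Pi(i,i-1)\geq \mathbf{P}(i,i)\,\mathbf{R}(i,i-1)$ is positive for $q,r\in(0,1)$ (radiation leaves the cell in state $i$ with probability $(1-q)^{m-i}$, and repair then reduces the count by one with positive probability), while for $i=0$ I would use $\Pi(0,1)\geq\mathbf{P}(0,1)\,\mathbf{R}(1,1)>0$. The crucial computation for part 2 is $\Pi(i,m)$: since $\mathbf{R}(k,m)=0$ for $k<m$ and $\mathbf{R}(m,m)=1$, the definition $\Pi=\mathbf{PR}$ collapses to $\Pi(i,m)=\mathbf{P}(i,m)=q^{m-i}$. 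Combined with Eq.~\eqref{trans} this yields $\sum_{j=0}^{m-1}\Pi(i,j)=1-q^{m-i}$.

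Adding $B_{tt}$ then finishes part 2 in two bookkeeping cases. For $t=1$ (so $i=0$), $B_{11}=\Pi(0,0)+\mu(1-q)^m-1$ and removing the diagonal term from the sum gives $B_{11}+R_1=\mu(1-q)^m-1+(1-\Pi(0,m))=\mu(1-q)^m-q^m$. For $2\leq t\leq m$ (so $1\leq i\leq m-1$), $B_{tt}=\Pi(i,i)-1$ and the same manipulation gives $B_{tt}+R_t=-1+(1-q^{m-i})=-q^{m-i}$. The main obstacle here is not analytic but notational: carefully tracking the two index conventions (the $1$-based matrix indices $t,k$ versus the $0$-based target counts $i,j$), remembering that $A$ is built from $\Pi(j,i)$ rather than $\Pi(i,j)$ in its off-diagonal entries, and making sure the mitosis contribution $\mu(1-q)^m$ that is baked into $A_{11}$ propagates into the $i=0$ case alone.
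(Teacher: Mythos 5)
Your proposal is correct and follows essentially the same route as the paper: transpose $A$, use the row-stochastic identity $\sum_{l=0}^{m}\Pi(i,l)=1$ together with $\Pi(i,m)=\mathbf{P}(i,m)=q^{m-i}$, and split into the $t=1$ and $2\leq t\leq m$ cases. The only difference is that your part~1 is actually more careful than the paper's, which merely asserts $\sum_{j\neq i}\Pi(i,j)>0$ without exhibiting a strictly positive term (and without noting that this requires $q,r\in(0,1)$).
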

 
 The main result of this section is written as follows:
 \begin{thm}\label{thm.m.general}
 	For any $m\geq 2$, $0 < \mu \leq 1$ and $0 < r < 1$, the system
 	$\mathbf {\dot x} = A(q,r) \mathbf x$ is stable at equilibrium
 	point $\mathbf{0}$, where $q > 0.5$.
 \end{thm}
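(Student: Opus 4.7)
The plan is to derive the theorem as a direct consequence of Lemma \ref{sum} by invoking the Gershgorin disc theorem on $B = A^\top$, which has the same spectrum as $A$. Because the off-diagonal entries of $A$, and hence of $B$, are the transition probabilities $\Pi(\cdot,\cdot)$, they are non-negative; therefore the usual Gershgorin radius $R_t = \sum_{k \ne t} |B_{tk}|$ coincides with the unsigned sum $\sum_{k \ne t} B_{tk}$ already evaluated in Lemma \ref{sum}. This switch to $A^\top$ is what makes the column-sum identity $\sum_{l=0}^{m} \Pi(l,i) = 1$ usable, and it is exactly the source of the clean closed forms in Lemma \ref{sum}(2).

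Next I would observe that for any eigenvalue $\lambda$ of $A$, Gershgorin applied to $B$ furnishes some row index $t$ with $|\lambda - B_{tt}| \le R_t$; since $B_{tt}$ is real this implies
\begin{equation}
\mathrm{Re}(\lambda) \;\le\; B_{tt} + R_t.
\end{equation}
Hence if every row sum $B_{tt} + R_t$ is strictly negative, every eigenvalue of $A$ lies in the open left half-plane and the zero equilibrium of $\dot x = A(q,r)x$ is asymptotically stable. For $2 \le t \le m$ Lemma \ref{sum}(2) gives $B_{tt} + R_t = -q^{m-i}$, which is strictly negative whenever $q > 0$, so those discs cause no trouble at all.

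The only real content is therefore the $t = 1$ case, where Lemma \ref{sum}(2) reduces the quantity to $\mu(1-q)^m - q^m$. Here I would invoke both hypotheses $q > 1/2$ and $\mu \le 1$: since $q > 1-q$ we have $(1-q)^m < q^m$, so $\mu(1-q)^m \le (1-q)^m < q^m$, giving strict negativity. This is the step where the assumption $q > 0.5$ is indispensable, and it is the only place where $\mu \le 1$ is used; the hypothesis $0 < r < 1$ plays no explicit role in the bound itself and merely guarantees that $\Pi = \mathbf{PR}$ is a genuine probability structure so Lemma \ref{sum} applies as stated. I do not anticipate any serious obstacle beyond verifying these strict inequalities, since the whole difficulty of the argument has been absorbed into Lemma \ref{sum}.
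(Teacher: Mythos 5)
Your proposal is correct and follows essentially the same route as the paper: apply the Gershgorin theorem to $B=A^\top$, use the row sums computed in Lemma \ref{sum} to bound $\mathrm{Re}(\lambda)\le B_{tt}+R_t$, and check that $\mu(1-q)^m-q^m<0$ and $-q^{m-i}<0$ when $q>0.5$. Your explicit remark that the non-negativity of the off-diagonal entries is what lets the signed sums of Lemma \ref{sum} serve as the Gershgorin radii is a small but worthwhile clarification that the paper leaves implicit.
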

 \begin{proof}
 	It is enough to show that all eigenvalues of matrix $A$
 	have negative real parts   . To provide this we will show that for any
 	$q > 0.5$, $0 < r < 1$ and $m\geq 2$ any point of Gershgorin
 	circles $D(A_{tt} ,R_t)$ have negative real part, where $1 \leq t
 	\leq m$. For this purpose we apply Gershgorin Theorem on matrix
 	$B=A^\top$. Based on Lemma \eqref{sum},
 	\begin{eqnarray}
 	B_{tt} + R_t=\left\{
 	\begin{array}{ll}
 	\mu (1-q)^m-q^m & i = 0 \\
 	-q^{(m-i)}             & 1 \leq i \leq (m-1) \\
 	\end{array}
 	\right.
 	\end{eqnarray}
 	Note that the function $q^m$ is an increasing function for $q>0$ and $m$ is an integer. Therefore, for $1-q < 0.5 < q$ we have:
 	\begin{equation}
 	\mu (1-q)^m < (1-q)^m < q^m
 	\end{equation}
 	where $0 \leq \mu \leq 1$. \\
 	Consequently,
 	$B_{tt}\in \mathbb{R}$ and $B_{tt} + R_t < 0$ where $q >
 	0.5$ (Figure \eqref{g-d}). This shows that the Gershgorin Circles belong to the left half
 	of real line. In addition, according to Gershgorin
 	Theorem, each eigenvalue of matrix $B$
 	belongs in one of Gershgorin discs.    
 	Therefore, each eigenvalue of
 	matrix $B$ has negative real part. 
 	Hence, every eigenvalue of matrix $A$ has negative real part. This completes the proof.
 	\begin{figure*}[ht] 
 		\centering
 		\epsfig{width=1.4\figurewidth,file=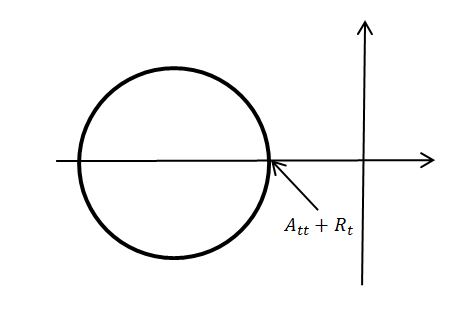}\\
 		\caption{\small{Gershgorin disc $B(A_{tt},R_t)$.}}
 		\label{g-d}
 	\end{figure*}
 \end{proof}

 \begin{thm}\label{thm.inf.general}
 	Suppose that $m\geq 2$ is an integer, $A \in \mathbb{M}^{m\times
 		m}$ and the set $S$ denotes the value $q$ such that the system \eqref{system2} is stable corresponding to all $0 < r < 1$ and $0
 	< \mu \leq 1$. Then:
 	\begin{equation}
 	\inf_q {A} = 0.5
 	\end{equation}
 \end{thm}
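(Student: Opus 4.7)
The plan is to establish both inequalities $\inf_q S \le 0.5$ and $\inf_q S \ge 0.5$. The first is immediate from Theorem~\ref{thm.m.general}: every $q \in (0.5, 1]$ lies in $S$, so $\inf S \le 0.5$. All of the actual content of the proof lies in the reverse inequality — namely that for every $q < 0.5$ one can exhibit an admissible pair $(r, \mu) \in (0,1) \times (0,1]$ for which $A(q, r)$ has an eigenvalue with strictly positive real part, so that $q \notin S$.

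My approach is a continuity/perturbation argument anchored at the boundary value $r = 1$. First I would compute $A(q, 1)$ in closed form. At $r = 1$ the repair matrix degenerates to $\mathbf{R}(i, 0) = 1$ for $0 \le i \le m - 1$, $\mathbf{R}(m, m) = 1$, and all other entries $0$. Substituting into $\Pi = \mathbf{PR}$ gives $\Pi(i, 0) = 1 - q^{m-i}$, $\Pi(i, m) = q^{m-i}$, and $\Pi(i, j) = 0$ for $1 \le j \le m - 1$. Feeding these into the definition~\eqref{matrix-A} shows that $A(q, 1)$ is upper triangular with $A_{11} = \mu(1 - q)^m - q^m$ and $A_{tt} = -1$ for $2 \le t \le m$. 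Its spectrum is thus $\{\mu(1 - q)^m - q^m,\ -1\}$.

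Next, I would specialize to $\mu = 1$, which is permitted since $0 < \mu \le 1$. For any $q < 0.5$ we have $1 - q > q$, hence $(1 - q)^m - q^m > 0$; in particular this eigenvalue is isolated in $\mathrm{spec}\bigl(A(q, 1)\bigr)$, separated from $-1$ by a gap of at least $1$.

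Finally, since $\mathbf{R}$ (and therefore $\Pi$ and $A$) depend polynomially on $r$, the matrix-valued map $r \mapsto A(q, r)$ is continuous on $[0, 1]$. Continuity of the spectrum together with the gap established above lets us continue the isolated eigenvalue $(1 - q)^m - q^m$ to an eigenvalue $\lambda(q, r)$ of $A(q, r)$ for $r$ near $1$. Choosing $r^{\ast} \in (0, 1)$ close enough to $1$ preserves $\mathrm{Re}\,\lambda(q, r^{\ast}) > 0$, so the admissible pair $(r^{\ast}, 1)$ witnesses instability of \eqref{system2}. Hence $q \notin S$ for every $q < 0.5$, giving $\inf S \ge 0.5$ and completing the proof. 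The only nontrivial step is the continuous continuation of the eigenvalue off the boundary $r = 1$ into the open interval — the computation of $A(q, 1)$ and the positivity estimate $(1 - q)^m > q^m$ are routine.
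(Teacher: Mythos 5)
Your proposal is correct, and it is considerably more complete than the argument the paper actually gives. The overall strategy coincides: both you and the authors obtain the lower bound $\inf S \geq 0.5$ by destabilizing the system with $r$ close to $1$ whenever $q < 0.5$, the upper bound being immediate from Theorem~\ref{thm.m.general} in both cases. The difference is that the paper's proof of the lower bound is a bare assertion --- ``for $q_0 = 0.5-\epsilon$ and for $m=2$ there exists $r_0 = 1-\epsilon$ such that the system is unstable'' --- with no computation, no eigenvalue estimate, and only the case $m=2$, even though the theorem is stated for every $m \geq 2$ and the set $S$ depends on $m$. You supply exactly what is missing: the closed form of $A(q,1)$, which is indeed upper triangular with spectrum $\{\mu(1-q)^m - q^m,\, -1\}$ because at $r=1$ one has $\Pi(i,0)=1-q^{m-i}$, $\Pi(i,m)=q^{m-i}$ and $\Pi(i,j)=0$ for $1 \leq j \leq m-1$, so that all rows of $A$ below the first contain only the diagonal entry $-1$; the choice $\mu=1$, which makes the top-left eigenvalue $(1-q)^m - q^m$ strictly positive for $q<0.5$; and the continuity-of-spectrum step that moves $r$ off the boundary value $1$ into the admissible open interval $(0,1)$. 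That last step is genuinely needed, since $r=1$ is excluded by hypothesis, and the paper never addresses it. One small simplification: you do not actually need the eigenvalue to be isolated or the gap to $-1$; continuity of the roots of the characteristic polynomial in the matrix entries already guarantees an eigenvalue of $A(q,r)$ within any prescribed distance of $(1-q)^m-q^m$ for $r$ sufficiently near $1$. Your argument also establishes the result uniformly in $m \geq 2$, which the paper's as written does not.
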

 \begin{proof}
 	According to Theorem \eqref{thm.m.general}, the
 	system \eqref{eq.system1} is stable for $q > 0.5$.
 	Now for any $\epsilon > 0$, corresponding to
 	$q_0=0.5-\epsilon$ and for $m=2$ there exists $r_0=1-\epsilon$
 	such that the system \eqref{eq.system1} is unstable. Hence:
 	\begin{equation}
 	\inf_q {A} = 0.5
 	\end{equation}
 \end{proof}
\section{Conclusion}\label{conclusion}
In this study, the population dynamics of tumor cells in the process of radiotherapy was examined. A system of differential equations with random variable coefficients was introduced to capture the heterogeneity of cell damage and the repair mechanism between two consecutive dose fractions. Subsequently, a new definition for tumor lifespan was introduced based on tumor population size. 
Based on the tumor lifespan, the effects of the probability that a target will be inactive after a dose fraction (q) and the probability that a target will reactivate after the repair mechanism (r) were investigated numerically. Our results are in good agreement with previously presented results \cite{KBV2}.
\section*{Acknowledgement}
The first author appreciates Dr. Ivy Chung and Dr. Ung Ngie Min from Faculty of Medicine, university of Malaya and Prof. Fazlul Sarkar from School of Medicine, Wayne State University for their constructing comments with regard to the manuscript. This study was financially supported by FRGS grant number FP015-2015A, University of Malaya.


\end{document}